\numberwithin{equation}{section}
\newtheorem{theorem}{Theorem}[section]
\newtheorem{prop}[theorem]{Proposition}
\newtheorem{corollary}[theorem]{Corollary}
\newtheorem{ex}[theorem]{Example}
\newtheorem{definition}[theorem]{Definition}
\numberwithin{equation}{section}
\begin{document}
\pagenumbering{gobble}
\title{F-index and coindex of some derived graphs}
\author{Nilanjan De}

\address{Department of Basic Sciences and Humanities, Calcutta Institute of Engineering and Management,
Kolkata, India}
 \email{ de.nilanjan@rediffmail.com}



\begin{abstract}
In this study, the explicit expressions for F-index and coindex of derived graphs such as a line graph, subdivision graph, vertex-semitotal graph, edge-semitotal graph, total graph and paraline graph (line graph of the subdivision graph) are obtained. \vspace{0.2 cm} \\
{\bf Mathematics Subject Classification (2010):} 05C07.\\
{\bf Key words:}  Topological indices, Zagreb indices and coindices, F-index and coindex, derived graphs, line graph, total graph.
\end{abstract}

\maketitle

\section{Introduction}
 Throughout the paper, we consider finite, connected and undirected graphs without any self-loops or multiple edges. Let $G$ be such a graph with vertex set $V(G)$ and edge set $E(G)$. Also let, $n$ and $m$ be the number of vertices and edges of $G$ and the edge connecting the vertices $u$ and $v$ is denoted by $uv$. Let ${{d}_{G}}(v)$ denote the degree of the vertex $v$ in $G$ which is the number of edges incident to $v$, that is, the number of first neighbors of $v$. Topological indices are numeric quantity derived from a molecular graph which correlate the physico-chemical properties of the molecular graph and have been found to be useful in isomer discrimination, quantitative structure-activity relationship (QSAR) and structure-property relationship (QSPR)  and are necessarily invariant under automorphism of graphs.

The first and the second classical Zagreb index of $G$ denoted by ${{M}_{1}}(G)$ and ${{M}_{2}}(G)$ respectively are one of the oldest topological indices introduced in \cite{gutm72} by Gutman and Trinajsti\'{c} and defined as
\[{{M}_{1}}(G)=\sum\limits_{v\in V(G)}{{{d}_{G}}{{(u)}^{2}}}=\sum\limits_{uv\in E(G)}{\left[ {{d}_{G}}(u)+{{d}_{G}}(v) \right]}\]        and    \[{{M}_{2}}(G)=\sum\limits_{uv\in V(G)}{{{d}_{G}}(u){{d}_{G}}(v)}.\]

These indices are most important topological indices in study of structure property correlation of molecules and have received attention in mathematical as well as chemical literature and have been extensively studied both with respect to mathematical and chemical point of view (see \cite{zho05,zho07,kha09,xu15,das15}).

Another topological index, defined as sum of cubes of degrees of all the vertices was also introduced in the same paper, where the first and second Zagreb indices were introduced \cite{gutm72}. Furtula and Gutman in \cite{fur15} recently investigated this index and named this index as ``forgotten topological index" or ``F-index" and showed that the predictive ability of this index is almost similar to that of first Zagreb index and for the entropy and acetic factor, both of them yield correlation coefficients greater than 0.95. The F-index of a graph $G$ is defined as

\[F(G)=\sum\limits_{v\in V(G)}{{{d}_{G}}{{(u)}^{3}}}=\sum\limits_{uv\in E(G)}{\left[ {{d}_{G}}{{(u)}^{2}}+{{d}_{G}}{{(v)}^{2}} \right]}.\]

Recently, the concept of F-index attracting much attention of researchers. The present author studied this index for different graph operations \cite{de16a} and also  studied F-index of several classes of nanostar dendrimers and total transformation graphs in \cite{de16c} and \cite{de16d}. In \cite{abd15}, Abdoa et al. investigate the trees extremal with respect to the F-index. Analogous to Zagreb coindices,  the present author  introduced  the  F-coindex  in \cite{de16b}. Thus, the F-coindex of a graph $G$ is defined as

\[\bar{F}(G)=\sum\limits_{uv\in E(\bar{G})}{\left[ {{d}_{G}}{{(u)}^{2}}+{{d}_{G}}{{(v)}^{2}} \right]}.\]

Different topological indices of some derived graphs such as a line graph, subdivision graph, vertex-semitotal graph, edge-semitotal graph, total graph and paraline graph have already been studied by many researcher. Gutman et al. in \cite{gut15} found first Zagreb index of some derived graphs. Also, Basavanagoud et al. in \cite{bas16} and \cite{bas15} calculated  multiplicative Zagreb indices and second Zagreb index of some derived graphs. In this paper, we continue the previous work to determine the F-index of these derived graphs. Throughout this paper, as usual, ${{C}_{n}}$ and ${{S}_{n}}$ denote the cycle and star graphs on $n$ vertices.

\section{Main results}

In this section, first we define different subdivision-related graphs and state some relevant results.

Line graph $L=L(G)$ is the graph with vertex set $V(L)=E(G)$ and whose vertices correspond to the edges of $G$ with two vertices being adjacent if and only if the corresponding edges in $G$ have a vertex in common two.

Subdivision graph $S=S(G)$ is the graph obtained from $G$ by replacing each of its edges by a path of length two, or equivalently, by inserting an additional vertex into each edge of $G$.

Vertex-semitotal graph ${{T}_{1}}={{T}_{1}}(G)$ with vertex set $V(G)\cup E(G)$ and edge set $E(S)\cup E(G)$ is the graph obtained from $G$ by adding a new vertex corresponding to each edge of $G$ and by joining each new vertex to the end vertices of the edge corresponding to it.

Edge-semitotal graph ${{T}_{2}}={{T}_{2}}(G)$ with vertex set $V(G)\cup E(G)$ and edge set $E(S)\cup E(L)$ is the graph obtained from $G$ by inserting a new vertex into each edge of $G$ and by joining with edges those pairs of these new vertices which lie on adjacent edges of $G$.

The Total graph of a graph $G$ is denoted by $T=T(G)$ with vertex set $V(G)\cup E(G)$ and any two vertices of $T(G)$ are adjacent if and only if they are either incident or adjacent in $G$ Total graph.

The Paraline graph $PL=PL(G)$ is the line graph of the subdivision graph with $2m$ vertices.
For details definitions of different derived graphs we refer our reader to \cite{gut15}.

\subsection{F-index of derived graphs}

In order to calculate the first F-index of the above specified derived graphs, we need following graph invariants. One of the redefined versions of Zagreb index is given by
	\[\operatorname{Re}Z{{G}_{3}}(G)=\sum\limits_{uv\in E(G)}{{{d}_{G}}(u){{d}_{G}}(v)[{{d}_{G}}(u)+{{d}_{G}}(v)]}.\]
For different recent study of these index see \cite{gao16,de16d}. In this paper we use another index to express the F-index of different derived graphs of a graph G and is denoted by ${{\xi }_{4}}(G)$, which is defined as

	\[\sum\limits_{v\in V(G)}{{{d}_{G}}{{(v)}^{4}}}=\sum\limits_{uv\in E(G)}{[{{d}_{G}}{{(u)}^{3}}+{{d}_{G}}{{(v)}^{3}}]}={{\xi }_{4}}(G).\]
Now in the following we compute the F-index of the above specified derived graphs.
\begin{prop} \label{p1}
Let G be be graph of order n and size m, then
\[F(L)={{\xi }_{4}}(G)+3\operatorname{Re}Z{{G}_{3}}(G)-6F(G)-12{{M}_{2}}(G)+12{{M}_{1}}(G)-8m.\]
\end{prop}
\begin{proof}For a Line graph, any two vertices are adjacent if the corresponding edges of $G$ are incident with a common vertex. Since, the edge $uv$ of the graph $G$ is incident to $[{{d}_{G}}(u)+{{d}_{G}}(v)-2]$ other edges of $G$, we have
\begin{eqnarray*}
F(L)&=&\sum\limits_{uv\in E(G)}{{{[{{d}_{G}}(u)+{{d}_{G}}(v)-2]}^{3}}}\\
    &=&\sum\limits_{uv\in E(G)}{[{{d}_{G}}{{(u)}^{3}}+{{d}_{G}}{{(v)}^{3}}]}+3\sum\limits_{uv\in E(G)}{{{d}_{G}}(u){{d}_{G}}(v)[{{d}_{G}}(u)+{{d}_{G}}(v)]}\\
    &&-6\sum\limits_{uv\in E(G)}{[{{d}_{G}}{{(u)}^{2}}+{{d}_{G}}{{(v)}^{2}}]}-12\sum\limits_{uv\in E(G)}{{{d}_{G}}(u){{d}_{G}}(v)}\\
    &&+12\sum\limits_{uv\in E(G)}{[{{d}_{G}}(u)+{{d}_{G}}(v)]}-8m\\
    &=&{{\xi }_{4}}(G)+3\operatorname{Re}Z{{G}_{3}}(G)-6F(G)-12{{M}_{2}}(G)+12{{M}_{1}}(G)-8m.
\end{eqnarray*}
Hence the desired result follows.\end{proof}

\begin{prop} \label{p2}
Let G be be graph of order n and size m, then
\[F(S)=F(G)+8m\]
\end{prop}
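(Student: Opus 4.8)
The plan is to exploit the very simple degree structure of the subdivision graph and to split the defining sum for the F-index over the two natural classes of vertices of $S(G)$. Recall that $V(S)=V(G)\cup E(G)$, where each original vertex of $G$ is retained and one new vertex is inserted into the interior of each edge; thus $S(G)$ has $n+m$ vertices in total. The first step is to record the degree of each vertex type in $S$. An original vertex $v\in V(G)$ is joined in $S$ precisely to the subdivision vertices sitting on the edges incident to $v$, so its degree is unchanged: $d_S(v)=d_G(v)$. A subdivision vertex, corresponding to an edge $uv\in E(G)$, is joined only to $u$ and $v$, hence has degree exactly $2$, and there are $m$ such vertices.

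With this in hand the computation is immediate. Writing the F-index as a sum of cubes of degrees and partitioning the vertex set of $S$ into original vertices and subdivision vertices gives
\[
F(S)=\sum_{v\in V(G)}d_S(v)^3+\sum_{uv\in E(G)}d_S(w_{uv})^3
     =\sum_{v\in V(G)}d_G(v)^3+\sum_{uv\in E(G)}2^3,
\]
where $w_{uv}$ denotes the vertex inserted into the edge $uv$. The first sum is by definition $F(G)$, while the second contributes $8$ for each of the $m$ edges, giving $8m$. Adding these yields $F(S)=F(G)+8m$.

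There is no genuine obstacle here; the content of the statement lies entirely in the structural observation that subdivision leaves the degrees of the original vertices untouched while introducing exactly $m$ new vertices, each of degree $2$. The only point demanding care is to confirm that an original vertex really does keep its degree — one must check that inserting a vertex into each incident edge neither creates nor destroys any incidences at $v$ — and to count the subdivision vertices correctly as $m$ rather than, say, $2m$. Once these are verified, the identity follows by direct evaluation of the two partial sums.
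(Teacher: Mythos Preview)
Your proof is correct and follows exactly the same approach as the paper: partition $V(S)$ into the original vertices (whose degrees are unchanged) and the $m$ subdivision vertices of degree $2$, then sum the cubes. The paper's argument is just a terser version of what you wrote.
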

\begin{proof}
Since for $u\in V(S)\cap V(G)$,  ${{d}_{S}}(u)={{d}_{G}}(u)$ and for $e=uv\in V(S)\cap E(G)$,  ${{d}_{S}}(e)=2$, we have

\medskip
$F(S)=\sum\limits_{u\in V(G)}{{{d}_{G}}{{(u)}^{3}}}+\sum\limits_{uv\in E(G)}{{{2}^{3}}}=F(G)+8m.$
\end{proof}
\begin{prop} \label{p3}
Let G be be graph of order n and size m, then
$F({{T}_{1}})=8F(G)+8m$.
\end{prop}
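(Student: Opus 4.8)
The plan is to mirror the strategy used for the subdivision graph in Proposition \ref{p2}: since $F$ is a sum over vertices of cubed degrees, I would first pin down the degree of every vertex of $T_1$ in terms of the degrees in $G$, and then split the defining sum according to the two natural classes of vertices. Recall that $T_1$ has vertex set $V(G)\cup E(G)$ and edge set $E(S)\cup E(G)$, so each original vertex retains all of its original adjacencies while also acquiring one new edge to each inserted vertex.

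First I would compute the two degree formulas. For a new vertex $w_e$ corresponding to an edge $e=uv\in E(G)$, its only neighbors are the endpoints $u$ and $v$ (the edges contributed by $E(S)$), so $d_{T_1}(w_e)=2$. For an original vertex $u\in V(G)$, the edges of $E(G)$ supply its $d_G(u)$ original neighbors, while the edges of $E(S)$ supply one further edge to each inserted vertex sitting on an edge incident to $u$; there are exactly $d_G(u)$ such incident edges, so altogether $d_{T_1}(u)=2d_G(u)$.

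With these in hand, I would split the F-index sum over the two vertex types and factor:
\[
F(T_1)=\sum_{u\in V(G)}d_{T_1}(u)^3+\sum_{e\in E(G)}d_{T_1}(w_e)^3=\sum_{u\in V(G)}\bigl(2d_G(u)\bigr)^3+\sum_{e\in E(G)}2^3,
\]
after which pulling the constant $8$ out of each sum gives $8\sum_{u}d_G(u)^3+8m=8F(G)+8m$, using $|E(G)|=m$. The only genuine obstacle is the degree bookkeeping for the original vertices: one must notice that each incident edge contributes \emph{both} an original adjacency (from $E(G)$) and a new adjacency to the corresponding inserted vertex (from $E(S)$), so that every original degree exactly doubles. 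Everything after that is a one-line factorization.
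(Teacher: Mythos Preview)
Your proof is correct and follows exactly the same approach as the paper: identify the degrees in $T_1$ as $d_{T_1}(u)=2d_G(u)$ for $u\in V(G)$ and $d_{T_1}(w_e)=2$ for inserted vertices, then split the cube-degree sum over these two vertex classes to obtain $8F(G)+8m$. Your version simply spells out the degree bookkeeping in more detail than the paper does.
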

\begin{proof}
Since for $u\in V({{T}_{1}})\cap V(G)$,  ${{d}_{{{T}_{1}}}}(u)=2{{d}_{G}}(u)$ and for $e=uv\in V({{T}_{1}})\cap E(G)$,  ${{d}_{{{T}_{1}}}}(e)=2$, we have

\medskip
$F({{T}_{1}})=$$\sum\limits_{u\in V(G)}{{{[2{{d}_{G}}(u)]}^{3}}}+\sum\limits_{uv\in E(G)}{{{2}^{3}}}=8F(G)+8m$.
\end{proof}

\begin{prop} \label{p4}
Let G be be graph of order n and size m, then
\[F({{T}_{2}})=F(G)+{{\xi }_{4}}(G)+3\operatorname{Re}Z{{G}_{3}}(G).\]
\end{prop}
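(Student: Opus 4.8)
The plan is to follow the same degree-counting strategy used in Propositions~\ref{p1} and \ref{p2}, now accounting for the fact that ${{T}_{2}}$ carries the edges of both the subdivision graph and the line graph. First I would identify the two classes of vertices of ${{T}_{2}}$ and compute their degrees. An original vertex $u\in V(G)\subseteq V({{T}_{2}})$ is joined, through the edges of $E(S)$, only to the ${{d}_{G}}(u)$ new vertices corresponding to edges of $G$ incident with $u$, and it participates in no edge of $E(L)$; hence ${{d}_{{{T}_{2}}}}(u)={{d}_{G}}(u)$.

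For a new vertex $e=uv\in E(G)\subseteq V({{T}_{2}})$ the two contributions must be combined, and this is the step where the main care is needed. Through the edges of $E(S)$ the vertex $e$ is joined to its two endpoints $u$ and $v$, contributing $2$ to its degree; through the edges of $E(L)$ it is joined to every edge of $G$ sharing a vertex with $uv$, of which there are ${{d}_{G}}(u)+{{d}_{G}}(v)-2$. Adding these gives ${{d}_{{{T}_{2}}}}(e)=2+[{{d}_{G}}(u)+{{d}_{G}}(v)-2]={{d}_{G}}(u)+{{d}_{G}}(v)$. The cancellation of the two copies of $2$ is the key observation; it is exactly what distinguishes ${{T}_{2}}$ from the line graph $L$ of Proposition~\ref{p1}, where no $E(S)$ contribution is present and the $-2$ survives into the cube.

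With the degrees in hand, I would split the defining sum of the F-index over the two vertex classes:
\[F({{T}_{2}})=\sum\limits_{u\in V(G)}{{{d}_{G}}{{(u)}^{3}}}+\sum\limits_{uv\in E(G)}{{{[{{d}_{G}}(u)+{{d}_{G}}(v)]}^{3}}}.\]
The first sum is immediately $F(G)$. For the second sum I would expand the cube as ${{[{{d}_{G}}(u)+{{d}_{G}}(v)]}^{3}}=[{{d}_{G}}{{(u)}^{3}}+{{d}_{G}}{{(v)}^{3}}]+3{{d}_{G}}(u){{d}_{G}}(v)[{{d}_{G}}(u)+{{d}_{G}}(v)]$ and recognise the two resulting edge-sums as ${{\xi }_{4}}(G)$ and $3\operatorname{Re}Z{{G}_{3}}(G)$ respectively, directly from the definitions recalled above. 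Collecting the three pieces then yields $F({{T}_{2}})=F(G)+{{\xi }_{4}}(G)+3\operatorname{Re}Z{{G}_{3}}(G)$, as claimed. The only genuine subtlety is the degree bookkeeping for the new vertices; once that is settled, the remainder is the same routine algebraic expansion already carried out in Proposition~\ref{p1}.
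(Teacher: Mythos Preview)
Your proof is correct and follows essentially the same approach as the paper: identify the degrees of the two classes of vertices in ${T}_{2}$, split the F-index sum accordingly, and expand the cube to recognise $F(G)$, ${\xi}_{4}(G)$ and $3\operatorname{Re}Z{G}_{3}(G)$. Your justification of the degree ${d}_{{T}_{2}}(e)={d}_{G}(u)+{d}_{G}(v)$ via the $E(S)$ and $E(L)$ contributions is in fact more detailed than the paper's, which simply states the degrees without derivation.
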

\begin{proof}
Since for $u\in V({{T}_{2}})\cap V(G)$,  ${{d}_{{{T}_{1}}}}(u)={{d}_{G}}(u)$ and for $e=uv\in V({{T}_{2}})\cap E(G)$,  ${{d}_{{{T}_{1}}}}(e)={{d}_{G}}(u)+{{d}_{G}}(v)$, we have
\begin{eqnarray*}
F({{T}_{2}})&=&\sum\limits_{u\in V(G)}{{{d}_{G}}{{(u)}^{3}}}+\sum\limits_{uv\in E(G)}{{{[{{d}_{G}}(u)+{{d}_{G}}(v)]}^{3}}}\\
            &=&F(G)+\sum\limits_{uv\in E(G)}{[{{d}_{G}}{{(u)}^{3}}+{{d}_{G}}{{(v)}^{3}}]}+3\sum\limits_{uv\in E(G)}{[{{d}_{G}}(u)+{{d}_{G}}(v)]}{{d}_{G}}(u){{d}_{G}}(v)\\
            &=&F(G)+{{\xi }_{4}}(G)+3\operatorname{Re}Z{{G}_{3}}(G).
\end{eqnarray*}
Hence the result.\end{proof}

\begin{prop}\label{p5}
Let G be be graph of order n and size m, then
\[F(T)=8F(G)+{{\xi }_{4}}(G)+3\operatorname{Re}Z{{G}_{3}}(G).\]
\end{prop}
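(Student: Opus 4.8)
The plan is to reduce everything to the two degree formulas for the vertices of the total graph and then reuse the algebra already carried out in Propositions \ref{p3} and \ref{p4}. The total graph $T$ is the common refinement of the vertex-semitotal and edge-semitotal constructions, so its F-index should split into a ``vertex part'' behaving like ${{T}_{1}}$ and an ``edge part'' behaving like ${{T}_{2}}$.

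First I would record the degrees of the two kinds of vertices of $T=T(G)$. Recall $V(T)=V(G)\cup E(G)$, and two vertices of $T$ are adjacent when they are incident or adjacent in $G$. Hence an original vertex $u\in V(G)$ is joined to its ${{d}_{G}}(u)$ neighbours in $G$ and to the ${{d}_{G}}(u)$ edges of $G$ incident with it, giving ${{d}_{T}}(u)=2{{d}_{G}}(u)$. An edge-vertex $e=uv\in E(G)$ is joined to its two end vertices and to the $[{{d}_{G}}(u)+{{d}_{G}}(v)-2]$ edges adjacent to $e$ in $G$, giving ${{d}_{T}}(e)=2+[{{d}_{G}}(u)+{{d}_{G}}(v)-2]={{d}_{G}}(u)+{{d}_{G}}(v)$.

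With these in hand I would split the defining sum over $V(T)$:
\[F(T)=\sum_{u\in V(G)}[2{{d}_{G}}(u)]^3+\sum_{uv\in E(G)}[{{d}_{G}}(u)+{{d}_{G}}(v)]^3.\]
The first sum is simply $8\sum_{u\in V(G)}{{d}_{G}}(u)^3=8F(G)$, exactly the vertex contribution computed in Proposition \ref{p3}. For the second sum I would expand the cube using
\[[{{d}_{G}}(u)+{{d}_{G}}(v)]^3=[{{d}_{G}}(u)^3+{{d}_{G}}(v)^3]+3\,{{d}_{G}}(u){{d}_{G}}(v)[{{d}_{G}}(u)+{{d}_{G}}(v)],\]
and summing each term over $E(G)$ recognises $\xi_4(G)$ and $3\operatorname{Re}ZG_3(G)$ directly from their definitions, precisely as in Proposition \ref{p4}. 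Adding the two pieces gives the stated identity.

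The computation is entirely routine; the only place where care is needed is the degree of an edge-vertex, where one must remember that $e=uv$ is adjacent both to its two endpoints and to the edges sharing a vertex with it, and that these two counts combine to ${{d}_{G}}(u)+{{d}_{G}}(v)$ rather than to some larger quantity. Once that degree is correct, the result is forced by the two preceding propositions, which already isolate the $8F(G)$ term and the $\xi_4(G)+3\operatorname{Re}ZG_3(G)$ term respectively.
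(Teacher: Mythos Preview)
Your argument is correct and is essentially identical to the paper's own proof: both record the degrees ${{d}_{T}}(u)=2{{d}_{G}}(u)$ for $u\in V(G)$ and ${{d}_{T}}(e)={{d}_{G}}(u)+{{d}_{G}}(v)$ for $e=uv\in E(G)$, split $F(T)$ over these two vertex types, and expand the cube to read off $8F(G)$, $\xi_4(G)$, and $3\operatorname{Re}ZG_3(G)$. Your version merely adds a short justification of the edge-vertex degree, which the paper states without comment.
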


\begin{proof}
Since for $u\in V(T)\cap V(G)$,  ${{d}_{T}}(u)=2{{d}_{G}}(u)$ and for $e=uv\in V(T)\cap E(G)$,  ${{d}_{{{T}_{1}}}}(e)={{d}_{G}}(u)+{{d}_{G}}(v)$, we have
\begin{eqnarray*}
F(T)&=&\sum\limits_{u\in V(G)}{{{[2{{d}_{G}}(u)]}^{3}}}+\sum\limits_{uv\in E(G)}{{{[{{d}_{G}}(u)+{{d}_{G}}(v)]}^{3}}}\\
    &=&8F(G)+\sum\limits_{uv\in E(G)}{[{{d}_{G}}{{(u)}^{3}}+{{d}_{G}}{{(v)}^{3}}]}+3\sum\limits_{uv\in E(G)}{[{{d}_{G}}(u)+{{d}_{G}}(v)]}{{d}_{G}}(u){{d}_{G}}(v)\\
    &=&8F(G)+{{\xi }_{4}}(G)+3\operatorname{Re}Z{{G}_{3}}(G).
\end{eqnarray*}
Hence the desired result follows.  \end{proof}

\begin{prop} \label{p6}
Let G be graph of order n and size m, then
$F(PL)={{\xi }_{4}}(G)$.
\end{prop}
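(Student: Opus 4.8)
The plan is to observe that $PL = L(S(G))$ and to reuse the line-graph degree identity already exploited in the proof of Proposition \ref{p1}: a vertex of $L(H)$ corresponding to an edge $ab \in E(H)$ has degree $d_H(a) + d_H(b) - 2$. Applying this with $H = S = S(G)$ gives $F(PL) = \sum_{e \in E(S)} \left[ d_S(a) + d_S(b) - 2 \right]^3$ where $e = ab$, so the whole computation reduces to knowing the edges and the degrees of the subdivision graph.

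First I would record the degrees in $S$, exactly as in Proposition \ref{p2}: each original vertex $u \in V(G)$ keeps $d_S(u) = d_G(u)$, while the subdivision vertex $w_{uv}$ inserted into an edge $uv$ has $d_S(w_{uv}) = 2$. Next I would describe $E(S)$: every edge $uv \in E(G)$ is replaced by the two edges $u w_{uv}$ and $w_{uv} v$, so $|E(S)| = 2m$ and the $2m$ vertices of $PL$ are precisely these ``half-edges''. Then the key computation is transparent: the vertex of $PL$ arising from $u w_{uv}$ has degree $d_S(u) + d_S(w_{uv}) - 2 = d_G(u) + 2 - 2 = d_G(u)$, and symmetrically the vertex from $w_{uv} v$ has degree $d_G(v)$. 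The crucial observation is that the subdivision vertex always has degree exactly $2$, so the $-2$ in the line-graph degree formula cancels it, and each $PL$-vertex simply inherits the degree of the original endpoint adjacent to it.

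Summing cubes of degrees over all vertices of $PL$, the two half-edges of each $uv \in E(G)$ contribute $d_G(u)^3 + d_G(v)^3$, whence $F(PL) = \sum_{uv \in E(G)} \left[ d_G(u)^3 + d_G(v)^3 \right] = \xi_4(G)$ by the definition of $\xi_4$. There is no serious obstacle here; the only point requiring care is the bookkeeping, namely that each original edge yields two distinct vertices of $PL$ (one per half-edge) rather than one, so one must neither conflate nor double-count them. Once the degree-$2$ cancellation is noted, the identity with $\xi_4(G)$ follows immediately.
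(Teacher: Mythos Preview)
Your proof is correct and essentially matches the paper's approach: both identify that each vertex of $PL$ inherits the degree of the corresponding endpoint in $G$, and then sum cubes. The only cosmetic difference is that you derive the degree fact explicitly via the line-graph formula applied to $S(G)$ and group the sum by edges of $G$ (the form $\sum_{uv\in E(G)}[d_G(u)^3+d_G(v)^3]$), whereas the paper asserts the degree structure directly and groups by vertices (the form $\sum_{u\in V(G)} d_G(u)\cdot d_G(u)^3$); these are the two equivalent expressions in the definition of $\xi_4(G)$.
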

\begin{proof}
Since, for the paraline graph PL,  ${{d}_{G}}(u)$ of its vertices have the same degree as the vertex $u$ of the graph $G$ and paraline graph PL has $2m$ vertices, we have

\medskip
$F(PL)=\sum\limits_{x\in V(G)}{{{d}_{PL}}{{(x)}^{3}}}=\sum\limits_{u\in V(G)}{{{d}_{G}}(u){{[{{d}_{G}}(u)]}^{3}}}={{\xi }_{4}}(G).$
\end{proof}

\begin{ex}
Consider the cycle ${{C}_{n}}$ with $n$ vertices where every vertex is of degree 2, then\\
(i) $F(L({{C}_{n}}))=8n$,
(ii) $F(S({{C}_{n}}))=16n$,
(iii) $F({{T}_{1}}({{C}_{n}}))=72n$,
(iv) $F({{T}_{2}}({{C}_{n}}))=72n$,
(v) $F(T({{C}_{n}}))=128n$,
(vi) $F(PL({{C}_{n}}))=16n.$
\end{ex}

\begin{ex}
Consider the cycle ${{S}_{n}}$ with $n$ vertices, then\\
(i) $F(L({{S}_{n}}))=8n$,
(ii) $F(S({{S}_{n}}))=(n-1)(n^2-2n+3)$,
(iii) $F({{T}_{1}}({{S}_{n}}))=72n$,
(iv) $F({{T}_{2}}({{S}_{n}}))=72n$,
(v) $F(T({{S}_{n}}))=128n$,
(vi) $F(PL({{S}_{n}}))=16n.$
\end{ex}

\subsection{F-coindex of derived graphs}

The F-index is the sum over the adjacent edges and F-coindex is the sum of the contribution of non adjacent pair of vertices.  The concept of F-coindex was introduced by De et al. \cite{de16b} and have shown that the F-coindex can predict the octanol water partition coefficients of molecular structures very efficiently. In that paper the following proposition was proved, which is necessary in the following study.

\begin{prop} \label{p7}
Let G be a simple graph with n vertices and m edges, then
	\[\bar{F}(G)=(n-1){{M}_{1}}(G)-F(G).\]
\end{prop}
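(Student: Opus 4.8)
The plan is to exploit the fact that an unordered pair of distinct vertices of $G$ is either an edge of $G$ or an edge of the complement $\bar{G}$, and never both. First I would introduce the auxiliary sum $\sum_{\{u,v\}}[d_G(u)^2+d_G(v)^2]$ ranging over all $\binom{n}{2}$ unordered pairs of distinct vertices. Since the edge set of $\bar{G}$ consists exactly of those pairs of distinct vertices that are not edges of $G$, this sum splits cleanly as
\[\sum_{\{u,v\}}[d_G(u)^2+d_G(v)^2]=\sum_{uv\in E(G)}[d_G(u)^2+d_G(v)^2]+\sum_{uv\in E(\bar{G})}[d_G(u)^2+d_G(v)^2]=F(G)+\bar{F}(G),\]
where I have used the edge formulation of $F(G)$ given in the introduction together with the definition of $\bar{F}(G)$.

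Next I would evaluate the same left-hand sum by a direct counting argument over vertices rather than pairs. For a fixed vertex $v$, the term $d_G(v)^2$ arises once in every pair containing $v$, and there are exactly $n-1$ such pairs; hence $d_G(v)^2$ is counted with multiplicity $n-1$, giving
\[\sum_{\{u,v\}}[d_G(u)^2+d_G(v)^2]=(n-1)\sum_{v\in V(G)}d_G(v)^2=(n-1)M_1(G).\]
Equating the two evaluations of this single sum yields $F(G)+\bar{F}(G)=(n-1)M_1(G)$, and rearranging produces the stated identity.

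There is no genuinely hard step here; the only point meriting care is the multiplicity bookkeeping in the full-pair sum, that is, confirming each vertex contributes its degree-square exactly $n-1$ times, which is equivalent to the identity $d_G(v)+d_{\bar{G}}(v)=n-1$. An alternative route would be to sum directly over $E(\bar{G})$: since the term $d_G(v)^2$ appears once for each of the $d_{\bar{G}}(v)=n-1-d_G(v)$ edges of $\bar{G}$ incident to $v$, one obtains $\bar{F}(G)=\sum_{v\in V(G)}d_G(v)^2\,(n-1-d_G(v))$, and expanding recovers $(n-1)M_1(G)-F(G)$. This reaches the same conclusion but is marginally less transparent than the complete-graph decomposition above, so I would present the former.
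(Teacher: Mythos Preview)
Your argument is correct. The decomposition of the complete set of unordered vertex pairs into $E(G)$ and $E(\bar G)$, followed by the per-vertex count showing each $d_G(v)^2$ appears exactly $n-1$ times, is clean and fully rigorous; the alternative direct expansion via $d_{\bar G}(v)=n-1-d_G(v)$ is equally valid.

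Note, however, that the paper does not actually supply its own proof of this proposition: it merely states the result and attributes the proof to the earlier paper \cite{de16b} on the F-coindex. So there is no in-paper argument to compare against. Your approach is the standard one and is almost certainly what appears in the cited source; in any case it establishes the identity without gaps.
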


The following proposition was proved in \cite{gut15} and also required here.

\begin{prop} \label{p8}
Let G be a graph of order n and size m, then\\
${{M}_{1}}(L)=F(G)-4{{M}_{1}}(G)+2{{M}_{2}}(G)+4m$\\
${{M}_{1}}(S)={{M}_{1}}(G)+4m$\\
${{M}_{1}}({{T}_{1}})=F(G)+{{M}_{1}}(G)+2{{M}_{2}}(G)$\\
${{M}_{1}}({{T}_{2}})=4{{M}_{1}}(G)+4m$\\
${{M}_{1}}(T)=F(G)+4{{M}_{1}}(G)+2{{M}_{2}}(G)$\\
${{M}_{1}}(PL)=F(G).$
\end{prop}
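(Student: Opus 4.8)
The plan is to read off each first Zagreb index from its vertex form $M_1(H)=\sum_{x\in V(H)}d_H(x)^2$, feeding in for every derived graph $H$ precisely the degree data already established in the proofs of Propositions~\ref{p1}--\ref{p6}. Because $M_1$ sums squares of degrees rather than cubes, each line of the proposition is the square analogue of the corresponding F-index computation, so no new combinatorial input about the derived graphs is required; only the algebraic expansions change.

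I would dispatch the easy cases first. For $S$ the degrees are $d_S(u)=d_G(u)$ on $V(G)$ and $d_S(e)=2$ on the $m$ inserted vertices, giving $M_1(S)=M_1(G)+4m$. For $PL$ each vertex $u$ of $G$ spawns $d_G(u)$ vertices of degree $d_G(u)$, so $M_1(PL)=\sum_u d_G(u)\cdot d_G(u)^2=F(G)$. For $T$, substituting $d_T(u)=2d_G(u)$ and $d_T(e)=d_G(u)+d_G(v)$ and expanding $[d_G(u)+d_G(v)]^2$ splits the edge sum into $F(G)+2M_2(G)$ while the vertex sum contributes $4M_1(G)$, yielding $M_1(T)=F(G)+4M_1(G)+2M_2(G)$. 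The two semitotal graphs follow the same pattern: one uses inserted-vertex degree $2$ and produces a $4M_1(G)+4m$ formula, the other uses inserted-vertex degree $d_G(u)+d_G(v)$ and produces an $F(G)+M_1(G)+2M_2(G)$ formula.

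The one step carrying real algebraic weight is the line graph, where the vertex for edge $uv$ has degree $d_G(u)+d_G(v)-2$. Here I would expand
\[
[d_G(u)+d_G(v)-2]^2=d_G(u)^2+d_G(v)^2+2d_G(u)d_G(v)-4d_G(u)-4d_G(v)+4,
\]
sum over $E(G)$, and convert the pieces via $\sum_{uv}[d_G(u)^2+d_G(v)^2]=F(G)$, $\sum_{uv}2d_G(u)d_G(v)=2M_2(G)$, $\sum_{uv}[d_G(u)+d_G(v)]=M_1(G)$, and $\sum_{uv}1=m$, to obtain $M_1(L)=F(G)-4M_1(G)+2M_2(G)+4m$. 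The main obstacle is bookkeeping rather than insight: one must track the linear and constant terms carefully, and, more importantly, match the two semitotal labels to the correct degree sequences. The formulas as stated for $M_1(T_1)$ and $M_1(T_2)$ are consistent only if $T_1$ is read as the edge-semitotal graph and $T_2$ as the vertex-semitotal graph, i.e. the reverse of the convention fixed in Section~2 and used in Propositions~\ref{p3} and~\ref{p4}; I would therefore align the labels (or add a reconciling remark) before applying these identities in the F-coindex computations to come.
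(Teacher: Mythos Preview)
Your argument is correct, but note that the paper does not actually prove Proposition~\ref{p8}: it is quoted verbatim from \cite{gut15} (``The following proposition was proved in \cite{gut15} and also required here''). So there is no proof in the paper to compare against; your direct derivation from the degree data of Propositions~\ref{p1}--\ref{p6} is exactly how one would establish these identities from scratch, and each of your expansions is right.

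Your observation about the $T_1/T_2$ labels is also correct and worth emphasising. With the paper's own conventions (Proposition~\ref{p3}: $d_{T_1}(u)=2d_G(u)$, $d_{T_1}(e)=2$; Proposition~\ref{p4}: $d_{T_2}(u)=d_G(u)$, $d_{T_2}(e)=d_G(u)+d_G(v)$), one gets $M_1(T_1)=4M_1(G)+4m$ and $M_1(T_2)=F(G)+M_1(G)+2M_2(G)$, the reverse of what is printed in Proposition~\ref{p8}. The paper in fact silently uses these corrected formulas in the proofs of the F-coindex results for $T_1$ and $T_2$ (Propositions~2.13 and~2.14), so the downstream computations are unaffected; the swap in the statement of Proposition~\ref{p8} is simply a transcription error. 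Your instinct to reconcile the labels before proceeding is therefore exactly the right move.
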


Now we calculate the F-coindex of the different derived graphs.

\begin{prop}
Let G be a graph of order n and size m, then
\[\bar{F}(L)=(m+5)F(G)-4(m+2){{M}_{1}}(G)+2(m+5){{M}_{2}}(G)-{{\xi }_{4}}(G)-3\operatorname{Re}Z{G}_{3}(G)+4m(m+1).\]
\end{prop}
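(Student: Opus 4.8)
The plan is to apply Proposition~\ref{p7} directly to the line graph $L$, which reduces the computation of $\bar{F}(L)$ to quantities I already know. Proposition~\ref{p7} gives $\bar{F}(L)=(|V(L)|-1)M_{1}(L)-F(L)$, where $|V(L)|=m$ is the number of vertices of the line graph (equal to the number of edges of $G$). Thus the formula becomes
\[
\bar{F}(L)=(m-1)M_{1}(L)-F(L).
\]
All the ingredients on the right-hand side are available: $M_{1}(L)$ is supplied by Proposition~\ref{p8}, and $F(L)$ is supplied by Proposition~\ref{p1}.

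The key steps, in order, are as follows. First I would substitute
\[
M_{1}(L)=F(G)-4M_{1}(G)+2M_{2}(G)+4m
\]
from Proposition~\ref{p8} into the term $(m-1)M_{1}(L)$, expanding to get $(m-1)$ times each summand. Second, I would substitute
\[
F(L)={{\xi }_{4}}(G)+3\operatorname{Re}ZG_{3}(G)-6F(G)-12M_{2}(G)+12M_{1}(G)-8m
\]
from Proposition~\ref{p1}. Third, I would collect like terms grouped by invariant, namely the coefficients of $F(G)$, $M_{1}(G)$, $M_{2}(G)$, ${{\xi }_{4}}(G)$, $\operatorname{Re}ZG_{3}(G)$, and the pure $m$-terms, and verify that they match the claimed expression.

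The main obstacle is purely bookkeeping: correctly tracking the coefficients through the two substitutions and the subtraction. For instance, the coefficient of $F(G)$ should combine $(m-1)\cdot 1$ from $M_{1}(L)$ with $-(-6)=+6$ from subtracting $F(L)$, giving $(m-1)+6=m+5$, matching the stated $(m+5)F(G)$. Similarly the $M_{2}(G)$ coefficient combines $(m-1)\cdot 2$ with $-(-12)=+12$ to give $2m-2+12=2m+10=2(m+5)$, and the constant-in-$m$ terms combine $(m-1)\cdot 4m$ with $-(-8m)=+8m$ to give $4m^2-4m+8m=4m^2+4m=4m(m+1)$; the remaining invariants ${{\xi }_{4}}(G)$ and $\operatorname{Re}ZG_{3}(G)$ appear only with a single sign from $-F(L)$, yielding their negative coefficients. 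Since every coefficient is determined by an elementary linear combination, I expect no genuine difficulty, only the need for care with the signs introduced by subtracting $F(L)$.
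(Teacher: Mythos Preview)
Your proposal is correct and follows exactly the same approach as the paper: apply Proposition~\ref{p7} to $L$ (which has $m$ vertices) to get $\bar{F}(L)=(m-1)M_{1}(L)-F(L)$, then substitute the expressions for $M_{1}(L)$ and $F(L)$ from Propositions~\ref{p8} and~\ref{p1} and collect like terms. Your coefficient checks are all accurate.
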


\begin{proof}
Since, the line graph L has m vertices, so using propositions \ref{p7}, \ref{p8} and \ref{p1}, we get
\begin{eqnarray*}
\bar{F}(L)&=&(m-1){{M}_{1}}(L)-F(L)\\
          &=&(m-1)[F(G)-4{{M}_{1}}(G)+2{{M}_{2}}(G)+4m]-[{{\xi }_{4}}(G)+3\operatorname{Re}Z{{G}_{3}}(G)\\
          &&-6F(G)-12{{M}_{2}}(G)+12{{M}_{1}}(G)-8m]\\
          &=&(m-1)F(G)+6F(G)-4(m-1){{M}_{1}}(G)-12{{M}_{1}}(G)+2(m-1){{M}_{2}}(G)\\
          &&+12{{M}_{2}}(G)+4m(m+1)-{{\xi }_{4}}(G)-3\operatorname{Re}Z{{G}_{3}}(G)
\end{eqnarray*}
from where the desired result follows. \end{proof}

\begin{prop}
Let G be a graph of order n and size m, then
\[\bar{F}(S)=(m+n-1){{M}_{1}}(G)-F(G)+4m(m+n-3).\]
\end{prop}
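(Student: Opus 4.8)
The plan is to follow exactly the template already used for $\bar{F}(L)$: apply the general coindex identity of Proposition \ref{p7} to the subdivision graph $S$, substitute the previously computed values of $M_1(S)$ and $F(S)$, and simplify. The one point demanding care is the vertex count fed into Proposition \ref{p7}, since that identity reads $\bar{F}(H)=(|V(H)|-1)M_1(H)-F(H)$ and must be applied to $H=S$ rather than to $G$ itself.

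First I would record that the subdivision graph $S$ is obtained by inserting a single new vertex into each of the $m$ edges of $G$, so $S$ has $n+m$ vertices. Applying Proposition \ref{p7} to $S$ then gives
\[\bar{F}(S)=(n+m-1)\,M_1(S)-F(S).\]
Next I would substitute the two ingredients already in hand: the value $M_1(S)=M_1(G)+4m$ from Proposition \ref{p8}, and the value $F(S)=F(G)+8m$ from Proposition \ref{p2}. This produces
\[\bar{F}(S)=(n+m-1)\bigl(M_1(G)+4m\bigr)-\bigl(F(G)+8m\bigr).\]

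Finally I would expand and collect terms. The $M_1(G)$ and $F(G)$ contributions carry over directly as $(n+m-1)M_1(G)-F(G)$, while the purely numerical part simplifies as $4m(n+m-1)-8m=4m(n+m-3)$, which yields the claimed identity. There is no genuine obstacle in this argument; the computation is entirely routine once Propositions \ref{p2}, \ref{p7}, and \ref{p8} are invoked. The only place where an error could creep in is using $n$ instead of $n+m$ for the order of $S$ when applying the coindex formula, so that substitution is the step I would verify most carefully.
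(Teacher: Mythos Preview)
Your proof is correct and follows exactly the same approach as the paper: apply Proposition~\ref{p7} to $S$ (which has $m+n$ vertices), substitute $M_1(S)=M_1(G)+4m$ from Proposition~\ref{p8} and $F(S)=F(G)+8m$ from Proposition~\ref{p2}, and simplify. There is nothing to add.
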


\begin{proof}
Since, the subdivision graph S has (m+n) vertices, so using propositions \ref{p7}, \ref{p8} and \ref{p2}, we get
\begin{eqnarray*}
\bar{F}(S)&=&(m+n-1){{M}_{1}}(S)-F(S)\\
          &=&(m+n-1)({{M}_{1}}(G)+4m)-F(G)-8m\\
          &=&(m+n-1){{M}_{1}}(G)-F(G)+4m(m+n-1)-8m.
\end{eqnarray*} \end{proof}

\begin{prop}
Let G be a graph of order n and size m, then
\[\bar{F}({{T}_{1}})=4(m+n-1){{M}_{1}}(G)-8F(G)+4m(m+n-3).\]
\end{prop}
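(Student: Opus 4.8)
The plan is to follow the same two-step recipe used in the two preceding F-coindex propositions, specializing Proposition \ref{p7} to the derived graph $T_1$. The vertex-semitotal graph has vertex set $V(G)\cup E(G)$, so it carries $m+n$ vertices; hence Proposition \ref{p7} reads
\[\bar{F}(T_1)=(m+n-1)M_1(T_1)-F(T_1).\]
Everything then reduces to supplying the two invariants $F(T_1)$ and $M_1(T_1)$ of the derived graph and performing a routine expansion, exactly as was done for $\bar{F}(L)$ and $\bar{F}(S)$.

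For the F-index I would simply quote Proposition \ref{p3}, which already gives $F(T_1)=8F(G)+8m$. For the first Zagreb index I would recompute it directly from the degree data recorded in the proof of Proposition \ref{p3}, namely $d_{T_1}(u)=2d_G(u)$ for $u\in V(G)$ and $d_{T_1}(e)=2$ for each new vertex $e\in E(G)$; squaring and summing (rather than cubing) gives
\[M_1(T_1)=\sum\limits_{u\in V(G)}{[2d_G(u)]^2}+\sum\limits_{uv\in E(G)}{2^2}=4M_1(G)+4m.\]
This is the ingredient I would be most careful about, and the one I expect to be the only real pitfall: it is precisely the factor $4$ on $M_1(G)$ together with the additive $4m$ that reproduces the stated coefficient $4(m+n-1)$ and the constant $4m(m+n-3)$, so any slip in evaluating $M_1(T_1)$ would corrupt both the leading term and the constant. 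I therefore prefer to pin it down from the degrees directly rather than read it off a general table.

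With both pieces in hand the remainder is a mechanical substitution:
\[\bar{F}(T_1)=(m+n-1)\bigl(4M_1(G)+4m\bigr)-\bigl(8F(G)+8m\bigr),\]
and I would then collect terms. The $M_1$ and $F$ contributions fall out immediately as $4(m+n-1)M_1(G)-8F(G)$. The only place demanding a moment of attention is the constant: the $4m(m+n-1)$ arising from $M_1(T_1)$ must be combined with the $-8m$ arising from $F(T_1)$, yielding $4m(m+n-1)-8m=4m(m+n)-12m=4m(m+n-3)$, which matches the claim. Beyond this bookkeeping I anticipate no genuine obstacle; the conceptual content is carried entirely by Propositions \ref{p7} and \ref{p3} together with the direct evaluation of $M_1(T_1)$.
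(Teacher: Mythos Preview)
Your proof is correct and follows essentially the same route as the paper: apply Proposition~\ref{p7} with $|V(T_1)|=m+n$, substitute $F(T_1)=8F(G)+8m$ from Proposition~\ref{p3} and $M_1(T_1)=4M_1(G)+4m$, and simplify. The only cosmetic difference is that you derive $M_1(T_1)$ directly from the degree data rather than quoting Proposition~\ref{p8}; this is harmless (indeed prudent, since the $T_1$ and $T_2$ lines in Proposition~\ref{p8} as stated are interchanged, though the paper's proof uses the correct expression).
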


\begin{proof}
Since, the total graph ${{T}_{1}}$  has $(m+n)$ vertices, so using propositions \ref{p7}, \ref{p8} and \ref{p3}, we get
\begin{eqnarray*}
\bar{F}({{T}_{1}})&=&(m+n-1){{M}_{1}}({{T}_{1}})-F({{T}_{1}})\\
                  &=&(m+n-1)(4{{M}_{1}}(G)+4m)-8F(G)-8m\\
                  &=&4(m+n-1){{M}_{1}}(G)-8F(G)+4m(m+n-3)
\end{eqnarray*}
Hence we get the desired result. \end{proof}

\begin{prop}
Let G be a graph of order n and size m, then
\[\bar{F}({{T}_{2}})=(m+n-2)F(G)+(m+n-1){{M}_{1}}(G)+2(m+n-1){{M}_{2}}(G)-{{\xi }_{4}}(G)-3\operatorname{Re}Z{{G}_{3}}(G).\]
\end{prop}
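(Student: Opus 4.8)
The plan is to follow the same template as the three preceding F-coindex propositions: apply the master identity of Proposition \ref{p7} with the derived graph $T_2$ playing the role of $G$. The only inputs needed are the number of vertices of $T_2$, its first Zagreb index $M_1(T_2)$, and its F-index $F(T_2)$, the last of which is already available from Proposition \ref{p4}.

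First I would fix the order of $T_2$. Since the vertex set of the edge-semitotal graph is $V(G)\cup E(G)$, it has exactly $n+m$ vertices, so Proposition \ref{p7} gives
\[\bar{F}(T_2)=(m+n-1)M_1(T_2)-F(T_2).\]
Next I need $M_1(T_2)$. Although it is recorded in Proposition \ref{p8}, it is cleaner to read it off the degree sequence used in the proof of Proposition \ref{p4}: every original vertex $u$ keeps degree $d_G(u)$, while the vertex inserted on the edge $uv$ has degree $d_G(u)+d_G(v)$. Hence
\[M_1(T_2)=\sum_{u\in V(G)}d_G(u)^2+\sum_{uv\in E(G)}\bigl(d_G(u)+d_G(v)\bigr)^2=M_1(G)+F(G)+2M_2(G),\]
where I have used $\sum_{uv\in E(G)}[d_G(u)^2+d_G(v)^2]=F(G)$ and $\sum_{uv\in E(G)}d_G(u)d_G(v)=M_2(G)$.

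Finally I would substitute this value of $M_1(T_2)$ together with $F(T_2)=F(G)+\xi_4(G)+3\operatorname{Re}Z{{G}_{3}}(G)$ from Proposition \ref{p4} into the identity above and collect terms. The only point requiring care is the bookkeeping of the $F(G)$ coefficient: the term $(m+n-1)F(G)$ produced by $M_1(T_2)$ loses one copy to the $-F(T_2)$ contribution, leaving $(m+n-2)F(G)$, while $M_1(G)$ and $M_2(G)$ retain the factor $(m+n-1)$ and the $\xi_4(G)$ and $\operatorname{Re}Z{{G}_{3}}(G)$ terms survive only with the minus sign inherited from $-F(T_2)$. This reproduces the stated formula. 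There is no genuine obstacle beyond this routine algebra; the one thing to watch is using the correct $M_1(T_2)=M_1(G)+F(G)+2M_2(G)$, since it is precisely the $F(G)$ hidden inside $M_1(T_2)$ that makes the final coefficient $(m+n-2)$ rather than $(m+n-1)$.
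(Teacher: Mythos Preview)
Your proof is correct and follows essentially the same route as the paper: apply Proposition~\ref{p7} to $T_2$ (which has $m+n$ vertices), substitute $M_1(T_2)=F(G)+M_1(G)+2M_2(G)$ and $F(T_2)$ from Proposition~\ref{p4}, and collect terms. Your choice to re-derive $M_1(T_2)$ from the degree sequence rather than quote Proposition~\ref{p8} is a harmless variation---and in fact prudent, since the formulas for $M_1(T_1)$ and $M_1(T_2)$ listed there are interchanged.
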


\begin{proof}
Since, the total graph ${{T}_{2}}$ has $(m+n)$ vertices, so using propositions \ref{p7}, \ref{p8} and \ref{p4}, we get
\begin{eqnarray*}
\bar{F}({{T}_{2}})&=&(m+n-1){{M}_{1}}({{T}_{2}})-F({{T}_{2}})\\
                  &=&(m+n-1)(F(G)+{{M}_{1}}(G)+2{{M}_{2}}(G))-F(G)-{{\xi }_{4}}(G)-3\operatorname{Re}Z{{G}_{3}}(G)\\
                  &=&(m+n-2)F(G)+(m+n-1){{M}_{1}}(G)+2(m+n-1){{M}_{2}}(G)-{{\xi }_{4}}(G)\\
                  &&-3\operatorname{Re}Z{{G}_{3}}(G).
\end{eqnarray*}
Hence the required result follows. \end{proof}

\begin{prop}
Let G be a graph of order n and size m, then
\[\bar{F}(T)=(m+n-9)F(G)+4(m+n-1){{M}_{1}}(G)+2(m+n-1){{M}_{2}}(G)-{{\xi }_{4}}(G)-3\operatorname{Re}Z{{G}_{3}}(G).\]
\end{prop}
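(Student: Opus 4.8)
The plan is to mirror exactly the template used in the five preceding F-coindex propositions. The key observation is that the total graph $T=T(G)$ has vertex set $V(G)\cup E(G)$, so it has $m+n$ vertices. This is the only structural input needed; everything else has already been assembled in the earlier results.

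First I would invoke Proposition \ref{p7} with the vertex count $m+n$, writing
\[
\bar{F}(T)=(m+n-1)M_1(T)-F(T).
\]
Next I would substitute the two ingredients that the paper has already established for the total graph: from Proposition \ref{p8} the value $M_1(T)=F(G)+4M_1(G)+2M_2(G)$, and from Proposition \ref{p5} the value $F(T)=8F(G)+\xi_4(G)+3\operatorname{Re}ZG_3(G)$. This gives
\[
\bar{F}(T)=(m+n-1)\bigl(F(G)+4M_1(G)+2M_2(G)\bigr)-\bigl(8F(G)+\xi_4(G)+3\operatorname{Re}ZG_3(G)\bigr).
\]

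The final step is to expand and collect like terms. The terms $4M_1(G)$ and $2M_2(G)$ simply carry the factor $(m+n-1)$, yielding the coefficients $4(m+n-1)$ and $2(m+n-1)$ in the claimed formula, while $\xi_4(G)$ and $3\operatorname{Re}ZG_3(G)$ pass through unchanged with a minus sign. The only point that warrants a moment's care is the coefficient of $F(G)$: the $M_1(T)$ contribution supplies $(m+n-1)F(G)$, and subtracting $F(T)$ removes a further $8F(G)$, so the two combine to $(m+n-9)F(G)$, matching the statement.

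I do not expect any genuine obstacle here; the result is a direct algebraic consequence of Propositions \ref{p5}, \ref{p7}, and \ref{p8}, and the whole proof reduces to substitution followed by bookkeeping. If anything is error-prone it is merely keeping track of the $F(G)$ terms, since they enter through both $M_1(T)$ and $F(T)$, but the check $(m+n-1)-8=m+n-9$ settles it immediately.
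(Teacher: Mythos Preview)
Your proposal is correct and follows exactly the same route as the paper: observe that $T$ has $m+n$ vertices, apply Proposition~\ref{p7} to write $\bar{F}(T)=(m+n-1)M_1(T)-F(T)$, then substitute the formulas for $M_1(T)$ and $F(T)$ from Propositions~\ref{p8} and~\ref{p5} and collect terms. Your treatment of the $F(G)$ coefficient is, if anything, more explicit than the paper's.
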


\begin{proof}
Since, the total graph $T$ has $(m+n)$ vertices, so using propositions \ref{p7}, \ref{p8} and \ref{p5}, we get
\begin{eqnarray*}
\bar{F}(T)&=&(m+n-1){{M}_{1}}(T)-F(T)\\
          &=&(m+n-1)(F(G)+4{{M}_{1}}(G)+2{{M}_{2}}(G))-8F(G)-{{\xi }_{4}}(G)-3\operatorname{Re}Z{{G}_{3}}(G)\\
          &=&(m+n-9)F(G)+4(m+n-1){{M}_{1}}(G)+2(m+n-1){{M}_{2}}(G)-{{\xi }_{4}}(G)\\
          &&-3\operatorname{Re}Z{{G}_{3}}(G).
\end{eqnarray*}
Hence the result follows. \end{proof}

\begin{prop}
Let G be a graph of order n and size m, then
\[\bar{F}(PL)=(2m-1)F(G)-{{\xi }_{4}}(G)\]
\end{prop}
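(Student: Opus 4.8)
The plan is to apply the general F-coindex identity of Proposition \ref{p7} directly to the paraline graph $PL$, in exactly the same manner as the preceding four propositions in this subsection. That identity asserts $\bar{F}(H)=(N-1)M_1(H)-F(H)$ for any simple graph $H$ on $N$ vertices, so the whole computation reduces to substituting the correct vertex count together with the already-known values of $M_1(PL)$ and $F(PL)$.

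First I would record that the paraline graph $PL$ has $2m$ vertices, as stated in its definition in the previous subsection. This is the essential bookkeeping step, because the coefficient $(N-1)$ appearing in Proposition \ref{p7} must here be taken as $(2m-1)$; this is what distinguishes the present case from the line-graph case (where $N=m$) and the three cases built on $V(G)\cup E(G)$ (where $N=m+n$). Next I would invoke Proposition \ref{p8}, which supplies $M_1(PL)=F(G)$, together with Proposition \ref{p6}, which supplies $F(PL)=\xi_4(G)$.

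Combining these three ingredients, I would substitute into the F-coindex identity to obtain
\[
\bar{F}(PL)=(2m-1)M_1(PL)-F(PL)=(2m-1)F(G)-\xi_4(G),
\]
which is precisely the claimed expression, so no further simplification is needed.

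Since each of $M_1(PL)$ and $F(PL)$ is given in closed form by earlier results, there is no genuine computational difficulty; the proof is a one-line substitution. The only point demanding care, and hence the sole potential source of error, is the vertex count of $PL$: reading it as $m$ or as $m+n$ would produce the wrong leading coefficient. Once $2m$ is fixed as the number of vertices, the result follows immediately from Propositions \ref{p7}, \ref{p8}, and \ref{p6}.
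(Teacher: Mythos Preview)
Your proposal is correct and follows exactly the same approach as the paper: recording that $PL$ has $2m$ vertices and then applying Proposition~\ref{p7} together with the values $M_1(PL)=F(G)$ from Proposition~\ref{p8} and $F(PL)=\xi_4(G)$ from Proposition~\ref{p6}. The paper's proof is the same one-line substitution you wrote, with no additional ideas.
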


\begin{proof}
Since, the paraline graph $PL$ has $2m$ vertices, so using propositions \ref{p7}, \ref{p8} and \ref{p6}, we get
\medskip
\[\bar{F}(PL)=(2m-1){{M}_{1}}(PL)-F(PL)=(2m-1)F(G)-{{\xi }_{4}}(G).\]
\end{proof}

\begin{ex}
Consider the cycle ${{C}_{n}}$ with $n$ vertices, then\\
(i) $\bar{F}(L({{C}_{n}}))=4n(n-3)$,
(ii) $\bar{F}(S({{C}_{n}}))=8n(2n-3)$,
(iii) $\bar{F}({{T}_{1}}({{C}_{n}}))=4n(10n-23)$,
(iv) $\bar{F}({{T}_{2}}({{C}_{n}}))=4n(10n-23),$
(v) $\bar{F}(T({{C}_{n}}))=32n(2n-5)$,
(vi) $\bar{F}(PL({{C}_{n}}))=8n(2n-3).$
\end{ex}

\begin{ex}
Consider the star graph ${{S}_{n}}$ with $n$ vertices, then\\
(i) $\bar{F}(L({{S}_{n}}))=0$,\\
(ii) $\bar{F}(S({{S}_{n}}))=(n-1)(n^2+8n-18)$,\\
(iii) $\bar{F}({{T}_{1}}({{S}_{n}}))=16(n-1)(n-2)$,\\
(iv) $\bar{F}({{T}_{2}}({{S}_{n}}))=(n-1)(n^3-n^2-12n-2),$\\
(v) $\bar{F}(T({{S}_{n}}))=(n-1)(6n^3-n^4-11n^2+14n-16)$,\\
(vi) $\bar{F}(PL({{S}_{n}}))=(n-1)(n^3-4n^2+7n-6).$
\end{ex}
\section{Conclusion}

In this paper, we have studied the F-index and coindex of different derived graphs. For further study, F-index and coindex of some other derived and composite graphs can be computed.

\end{document}